\title{Simpson's Paradox in Recommender Fairness: Reconciling differences between per-user and aggregated evaluations}
\author[F. Prost, B. Packer, J.Chen, L. Wei, P. Kremp, N. Blumm, S. Wang, T. Doshi, T. Osadebe, L. Heldt, E.H. Chi, A. Beutel]{Flavien Prost, Ben Packer, Jilin Chen, Li Wei, Pierre Kremp, Nicholas Blumm, Susan Wang, Tulsee Doshi, Tonia Osadebe, Lukasz Heldt, Ed H. Chi, Alex Beutel}
\date{October 2022}
\begin{document}

\begin{abstract}
  There has been a flurry of research in recent years on notions of fairness in ranking and recommender systems, particularly on how to evaluate if a recommender allocates exposure equally across groups of relevant items (also known as provider fairness). While this research has laid an important foundation, it gave rise to different approaches depending on whether relevant items are compared per-user/per-query or aggregated across users. Despite both being  established and intuitive, we discover that these two notions can lead to opposite conclusions, a form of Simpson's Paradox. We reconcile these notions and show that the tension is due to differences in distributions of users where items are relevant, and break down the important factors of the user's recommendations. Based on this new understanding, practitioners might be interested in either notions, but might face challenges with the per-user metric due to partial observability of the relevance and user satisfaction, typical in real-world recommenders. We describe a technique based on distribution matching to estimate it in such a scenario. We demonstrate on simulated and real-world recommender data the effectiveness and usefulness of such an approach.
\end{abstract}

\maketitle

\section{Introduction}
How can we understand and measure recommender performance and provider fairness \emph{across the broad diversity of users and requests} that they serve?
In recent years, the research community has recognized the importance and challenge of evaluating the fairness of recommender systems, with numerous proposals being put forth \cite{ashudeeppaper, Biega, fdd, alexbeutelrecommendation, sirui, DBLP:journals/corr/abs-2201-01180, 10.1145/3269206.3272027, DBLP:journals/corr/Burke17aa}. In this paper, we focus on the goal of provider-side fairness \cite{ashudeeppaper, Biega, fdd, alexbeutelrecommendation, sirui}: we assume that items or providers have different sensitive groups and want to measure whether the recommender provides equally good ranking/exposure for items of different groups.  Across many domains, from music recommendation \cite{10.1145/3269206.3272027} to advertising \cite{DBLP:journals/corr/abs-1806-06122} to job search \cite{linkedin, 10.1145/3442188.3445928}, such a property is of deep importance to the items and their producers.

We  study a recommender system which is responsible for ranking items for different users. Here, the term “user” refers to the recommendation event and is a similar concept as “requests” or “queries” in a recommender system. As users typically are exposed to the highest ranked items, the recommender plays an important role to allocate the traffic to items of different providers. In this context, there has recently been a proliferation of metrics proposed to capture the provider-side fairness of recommender systems \cite{alexbeutelrecommendation, gupta, ashudeeppaper, Biega, fdd, sirui}.  Many papers analyze the exposure allocated to items and propose that \textit{items from different groups should have equal exposure}, usually with a relation to their relevance. However we note one significant difference between previous metrics, which previous papers and reviews \cite{reviewFairness} have overlooked: some prior work compares the exposure of items per-user (or "per-query") \cite{ashudeeppaper, fdd_stochastic, Biega, sirui, linkedin} while others aggregate across users \cite{alexbeutelrecommendation, gupta}. Our first contribution is to highlight how this decision can lead to significant practical differences.

\vspace{-1mm}
\begin{figure}[htbp]
    \centering
    \includegraphics[width=0.52\textwidth]{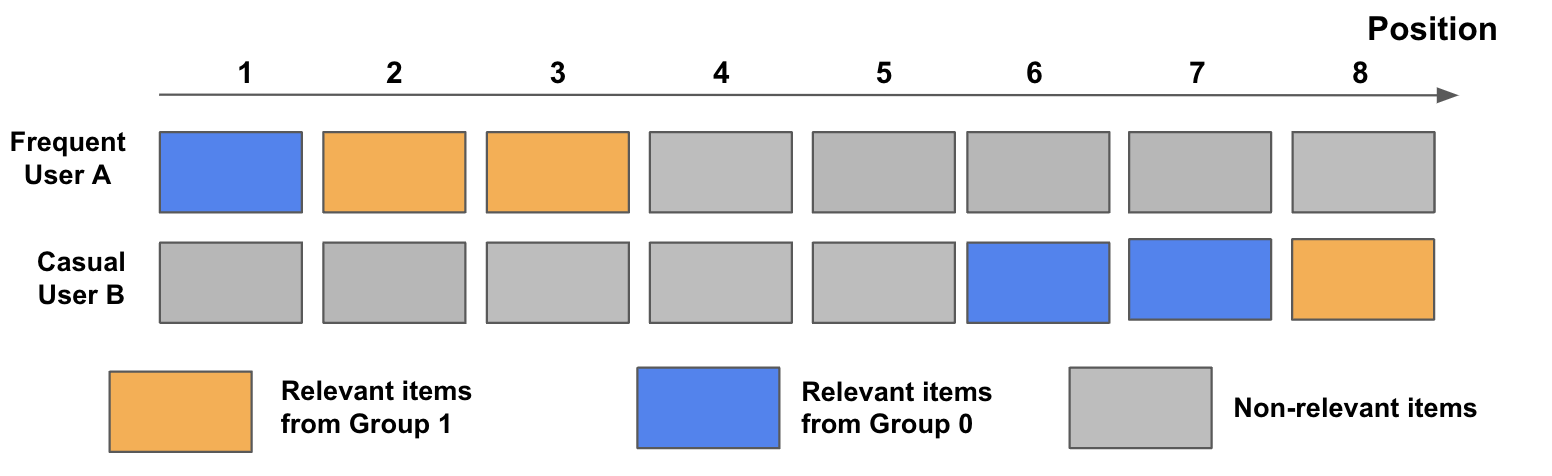}
    \caption{Simpson's paradox (two contradictory observations): For each user, Orange seems disadvantaged (ranked lower than Blue). However, aggregated across users, Orange is advantaged (Orange's mean position\protect\footnotemark is 4.3 compared to 4.7 for Blue).
    Explanation: Some of the user factors might affect exposure. In particular, the system ranks the relevant items (both Orange and Blue) in low position for casual users, for which Blue are frequent.}
    \label{fig:simpson}
\end{figure}

\footnotetext{For simplification of this example, we use the mean position as the exposure of a given group}

Let's make it more concrete through a motivating example in which we consider two commonly used  metrics from the fairness literature:
\begin{enumerate}
    \item Per-user: For each user, relevant items created by Group 0 providers should have an equal chance of exposure as relevant items from Group 1 providers \cite{ashudeeppaper, fdd_stochastic, Biega, sirui, linkedin}.
    \item Aggregate: Aggregated across all users, relevant items created by Group 0 providers should have an equal chance of exposure as relevant items from Group 1 providers \cite{alexbeutelrecommendation, gupta}.
\end{enumerate}

As both of these metrics seem intuitively similar and are supported by past literature, practitioners might choose between them arbitrarily. The first key contribution of this paper is the surprising finding that \textit{these metrics can actually point in opposite directions in practice} due to a Simpson’s paradox. Indeed, in Figure \ref{fig:simpson}, according to the per-user metric, Group 0 providers are advantaged (they are ranked higher than items from Group 1 on any user) while the aggregate metric indicates that Group 0 providers are disadvantaged (these items have lower exposure in aggregate).

How can we explain these seemingly-contradictory observations? We point out that the aggregate metric is influenced by \emph{which} users that each group is relevant. Indeed, when aggregating, we are comparing two data samples (the set of relevant items from each group) which follow different user distributions. Hence these differences of user distributions might influence the comparison, similar to a confounding variable in causal literature \cite{Simpson1951TheIO, 10.2307/2684093, doi:10.1126/science.187.4175.398}. For example, if one group of items is more often relevant to frequent users than casual users (where the recommender has lower \textit{performance}, i.e., putting relevant items in low positions), that group of items could seem to be advantaged in an aggregate evaluation while still facing a per-user gap, which is the scenario of Figure \ref{fig:simpson}. Alternatively, if one group of items is more relevant to users with broad interests for who many items are relevant (hence competing for exposure), the aggregate evaluation may suggest a bias against that group, when this is in fact due to this higher \textit{competition}. We derive theoretical equations to establish that, indeed, \textbf{the aggregate metric is influenced by the per-user metric but also by the user distributions (the competition and performance factors)}.

Given this new understanding, one may be interested in either per-user or aggregate evaluations depending on the application. While the aggregate metric can be computed under most scenarios, past literature on per-user evaluation \cite{ashudeeppaper, fdd_stochastic, Biega, sirui, linkedin} generally assumes that the relevance for every item is observed. However, in real-world recommenders, we often get user feedback on one, or a few items, out of a set of recommendations, as each user at each moment in time is unique. As a result, we rarely get to observe the relevance of multiple items on the same user, which does not allow to directly compute the per-user metric. To enable practitioners, we describe an approach to overcome this \textit{partial observability}: we show that we can estimate the per-user metric by matching the user distributions of the two samples being compared, and present a solution based on a sample matching strategy.

Finally, we study the conflict between these two metrics empirically and assess the quality of our technique to estimate the per-user metric under partial observability.  First, we use a simulated environment to demonstrate that distributional differences of the users can indeed lead to conflicting conclusions between the metrics. We show that our approach can accurately estimate the per-user metric under partial observability.  Second, we analyze a real-world recommender, similar to past work \cite{alexbeutelrecommendation,DBLP:journals/corr/abs-1708-05031, 10.1145/3219819.3220007}, and demonstrate that the tension between both metrics is observable in such a system, as each group of items is relevant to different users.  This demonstrates the importance of taking a more holistic view to study fairness in recommender systems.
 
To summarize, our contributions are:
\vspace{-1mm}
\begin{enumerate}
    \item \textbf{Simpson's Paradox in Ranking Fairness:} We highlight a conflict between two well-established fairness metrics in recommendations -- the per-user and aggregate metrics. We highlight that the aggregate metric is indeed influenced by differences in user distribution across the two groups, a form of Simpson's Paradox. We point out that the influence of the user distributions on the aggregate metric can be captured by two factors of the user's recommendations, competition and performance, and formalize this by characterizing mathematically the aggregate metric as a function of the per-user gap, the competition, and the performance (Section \ref{sec:Paradox}).
    \item \textbf{Method to estimate the per-user metric under partial observability:} In order to overcome the partial observability of relevance, we employ approaches from the causal literature to match the distribution of users across groups and estimate the per-user metric (Section \ref{sec:isolating_per_context}).
    \item \textbf{Experimental Evidence:} We demonstrate on multiple datasets that the tension between the two metrics might occur in both simulated and real-world environments, similar to the Simpson's paradox mentioned. Then we show the efficiency of our technique to recover the per-user metric (Sec. 6 and 7).
\end{enumerate}

\vspace{-3mm}
\section{Related Work}

\paragraph{Evaluation of Recommenders and Ranking}

To support the development of recommender systems, there has been much research to develop efficient ranking metrics. 
Offline evaluation methods \cite{DBLP:journals/corr/SchnabelSSCJ16, DBLP:journals/corr/AgarwalBSJ17, gilotte_2018, DBLP:journals/corr/abs-1003-5956, 10.1145/3289600.3291017} use logged data to measure how well a recommender exposes items to users. These approaches rely on a notion of \textit{relevant items} \cite{10.1145/3240323.3241622} and measure how well the model is able to predict the relevance of each item (e.g. Mean Squared Error \cite{DBLP:journals/corr/abs-1806-06122}) or how well it ranks relevant items (e.g. Normalized Discounted Cumulative Gain \cite{10.1145/582415.582418, 10.1145/345508.345545, 10.1145/582415.582418, ndcgcomparison}). The "relevance" can be based on human ratings but, for real-world personalized systems where such ratings are unavailable, we typically utilize implicit feedback and define relevance based on user engagement and satisfaction. This data collection from an already deployed recommender can lead to sampling biases \cite{DBLP:journals/corr/SwaminathanJ15, joachimsLondon}, and offline evaluations might therefore not reflect well the behavior of the actual system \cite{SimpsonsRecs}. To correct for this bias, prior work has utilized random experiments where items are randomly exposed to users, a setting also known as logged bandit feedback \cite{DBLP:journals/corr/SwaminathanJ15, joachimsLondon}. While these methods have laid important foundations for fairness metrics (e.g. the notion of relevance, the understanding on data biases), they focus on comparing recommender models, and do not extend directly to comparing groups of items or providers, which is necessary for group fairness evaluation.

\vspace{-1mm}
\paragraph{Fairness, Recommendation, and Ranking}

There has been a growing interest to ensure that recommenders provide fair recommendations for groups of items or providers from different demographic groups \cite{ashudeeppaper, Biega, fdd, alexbeutelrecommendation, sirui}.  Most of these papers follow the notion of \textit{Equality of Exposure}, where Exposure is typically defined as a function of the position of the items over the rankings \cite{10.1145/582415.582418}, such as NDCG \cite{Biega}, impression rate \cite{fdd_stochastic} or  pairwise order \cite{alexbeutelrecommendation, gupta}. Some papers suggest that the exposure should be allocated equally between groups \cite{StatisticalKuhlman}, mapping to Statistical Parity. Others follow the perspective of Equality of Opportunity \cite{HardtPS16}, accounting for relevance \cite{ashudeeppaper, Biega, fdd, alexbeutelrecommendation, sirui}.  For instance, \cite{ashudeeppaper, Biega} propose that the exposure of an item or group of items should be proportional to its average relevance. A variant to this proportionality constraint is to ensure that equally relevant items should have equal exposure \cite{alexbeutelrecommendation, sirui}. We will focus on this last option, however our observations transfer to the proportionality-based metrics.

Within this notion of Equality of Exposure, we group the past literature into two different categories: some only compare the exposure of relevant items per user \cite{ashudeeppaper, fdd_stochastic, linkedin, DBLP:journals/corr/abs-2108-05152}, while others compare the exposure when aggregated across users \cite{alexbeutelrecommendation, gupta}. As we show via the Simpson's paradox, these two different notions can lead to different conclusions: the difference is due to the fact that, when aggregating across users, the difference in user distributions between the two groups might have an influence on the gap. We will articulate this influence of the user by highlighting two specific factors related to the user: the performance, i.e., whether the system is able to rank relevant higher than non-relevant items for this user, and the competition, i.e., how many relevant items the user is interested in. We then prove that the aggregate metric is the combination of the per-user metric and these two factors.

This decomposition helps to structure prior literature. The first branch of per-user comparisons \cite{ashudeeppaper, fdd_stochastic, Biega, sirui, linkedin} measures a narrower concept and does not incorporate differences due to the other factors. For instance, it will not reflect a scenario where a group is relevant primarily to users where the system has poor performance. On the other hand, the aggregate comparison \cite{alexbeutelrecommendation, gupta} encapsulates these scenarios but without any disentanglement. Interestingly some prior work \cite{sirui, Biega} takes an intermediate path: they compare the exposure in aggregate but use an exposure normalized by the relevance of other items. The most common example is the Normalized Discounted cumulative gain (NDCG) \cite{sirui, Biega}, where the denominator accounts for how many items are also relevant for this user. This effectively controls for the impact of the competition factor on the metric. 

This framework can provide more insights to the practitioner when evaluating their recommender system. However the per-user metric, as well as the ones using normalized exposure, require knowing the relevance of all items for all users, and therefore are challenging to apply to real-world recommender. Indeed such systems often define relevance based on post-click signals, such as dwell-time \cite{dwell} or ratings \cite{5197422}, and users typically engage with no more than 1 item. This lack of observability is exacerbated by the use of random experiments \cite{joachimsLondon, alexbeutelrecommendation}, where an item is sampled from the corpus and shown. This process enables us to collect an unbiased sample of items with user feedback but limits as well the set of relevant items to maximum one per user (the boosted item if engaged). We define \textbf{partial observability}, as the situation where the relevance is known for only a few instances. \textit{We analyze the case of partial observability and suggest a technique, based on distribution matching, to estimate the per-user metric}. To the best of our knowledge, there has been no solution proposed before under this situation. 


\vspace{-2mm}

\paragraph{Distribution Matching}
We connect this challenge to a difference in underlying distributions where each group is sampled from different distributions of users. This relates to Simpson's paradox \cite{Simpson1951TheIO, 10.2307/2684093}, where a confounding variable can have an effect on the measured trend. A classic example comes from graduate admissions at UC Berkeley \cite{doi:10.1126/science.187.4175.398}: The overall rates showed that men were more likely to be admitted, but men were applying to departments with higher admission rates. When controlling for the department, the data showed a slight advantage towards women. A common solution to control for this effect is to equalize the distributions of both groups.
To that end, our paper builds on past reweighing techniques, based on Inverse Propensity Weighting or Scoring, commonly used to match two distributions in the causal literature \cite{Stuart_2010, 10.1093/biomet/70.1.41,causal2, causal3} and in the offline recommender evaluation work \cite{DBLP:journals/corr/SwaminathanJ15, joachimsLondon}.

\vspace{-2mm}
\section{Problem Setup and Background}

\subsection{Notations}

A recommender or ranking system is responsible for presenting the relevant items to the users. More formally, a recommender system ranks items $i$ from a catalog $I$ for any user $u$, where the \textit{user} encapsulates both user and contextual information \cite{contaware}, e.g. time of the day, user features, and is a similar concept to "query" in search systems. Note that, in some other applications, \textit{user} might be referred as event, request, query, context, or home page. One concrete example of such system is a comments section of a news site: When a user reads a certain news article, the recommender system would select the most relevant comments to present. In this example, the items are all the comments available and the user contains properties of the article (e.g. topic, language) as well as properties of the viewer (e.g. past history, country). The recommender considers all possible comments, ranks them and only the top ones are shown. Another example is a social media platform, where the user is shown the most relevant pieces of content on their page. Each item $i$ is a piece of content that the user might be interested in, the user is some combination of the user's profile and historic activity.

The relevance of an item $i$ for a given user $u$ is defined as $Rel(i, u)$, and is typically defined from user feedback \cite{10.1145/3240323.3241622}, such as clicks \cite{10.1145/2648584.2648589,41159}, ratings \cite{5197422} or dwell-time \cite{dwell}. It is important to note that the relevance is not a function of an item only but instead is defined for every item-cross-user $(i,u) \in I \times U$, representing a personalized and contextualized recommender system. However throughout this paper, we might use loosely the wording "relevant items" to designate all tuples $(i,u)$ where an item $i$ is relevant for the user $u$. For simplicity, we binarize the notion of relevance, $Rel(i,u) \in \{0, 1\}$, but it can be bucketed as well to reflect different levels of relevance, similar to \cite{10.1145/345508.345545, alexbeutelrecommendation, prost2019h}. Initially and until Section \ref{sec:isolating_per_context}, we assume that $Rel$ is known for all items and users $(i,u)$.

For every user, the recommender ranks all items and assigns them to a ranked-position $\mathit{Position}(i,u)$. Then items are \emph{exposed} to the user based on this ranking. While our work can be generalized to various models for $Exposure$, we decide here to model it as a function of the item's position: $\mathit{Exposure}(i, u) = f(\mathit{Position}(i,u))$. The function $f$ can for instance be a position-decayed function, as in NDCG or an impression prior \cite{10.1145/582415.582418}, to represent the probability that the item is seen by the user.

\vspace{-1mm}
\subsection{Two fairness metrics for recommenders}

We now assume that each item $i$ has a sensitive attribute $S(i) \in \{0, 1\}$  and our goal is to measure if items from different groups are treated equally by the recommender system. For example, in the news article comments setting, $S$ may be whether the comment's author belongs to a protected demographic group, and in the social media setting, $S$ may similarly correspond to the item producer's demographic group (e.g. gender) or, for a non-fairness use case, its topic.

In this paper, we follow the perspective of Equality of Opportunity from the classification literature \cite{HardtPS16} which has been adapted for ranking \cite{ashudeeppaper, Biega, fdd, alexbeutelrecommendation, sirui}. \cite{ashudeeppaper, Biega} propose that the exposure of a group should be proportional to its aggregated relevance, while \cite{alexbeutelrecommendation, gupta} uses conditioning to capture that items, when relevant, should have equal exposure. We will use this latter notion for our definitions, but we expect the observations transfer to the other definitions. We will now describe  definitions following both frameworks for use in the rest of the paper.

\begin{definition}
\label{def:peruser}
\textbf{For a user}, the per-user gap is defined as the difference between groups in expected exposure over items that are relevant for this given user.

\vspace{-2mm}

\begin{align*}
\begin{split}
PerUserGap (u) &= \mathop{{}\mathbb{E}}_{i\sim I} [Exposure(i, u) | Rel(i, u) = 1, S(i) = 1] \\
&- \mathop{{}\mathbb{E}}_{i\sim I} [Exposure(i, u) | Rel(i, u) = 1, S(i) = 0]
\end{split}
\end{align*}
\end{definition}

As this metric is defined for each user, we can capture the full system gap by averaging over users: $PerUserGap = \mathop{{}\mathbb{E}}_u PerUserGap(u)$.

\begin{definition}
\label{def:aggregate}
The aggregated gap is defined as the difference between groups in average exposure over over items that are relevant for any user.

\vspace{-2mm}

\begin{align*}
\begin{split}
AggregateGap &= \mathop{{}\mathbb{E}}_{(i,u)\sim I \times U} [Exposure(i, u) | Rel(i, u) = 1, S(i) = 1] \\
&- \mathop{{}\mathbb{E}}_{(i,u) \sim I \times U} [Exposure(i, u) | Rel(i, u) = 1, S(i) = 0] 
\end{split}
\end{align*}
\end{definition}

Equivalently, we could remove the conditioning and rewrite the metric as the difference of exposure over two uniform samples from $D^0$ and $D^1$, with $ D^s = \{ (i,u) | Rel(i,u) = 1, S(i) = s \}$. This metric targets that, aggregated over the instances where an item is relevant to a user, the exposure should not depend on the group.

Each definition is supported by prior work and seems intuitive when taken separately. However we will see that they represent different notions and articulate those.

\vspace{-2mm}
\section{Reconciling the metrics}
\label{sec:Paradox}

\subsection{Simpson's paradox as a motivating example}
\label{sec:simpsons}

Let's first go through the example of Fig. \ref{fig:simpson} to articulate the potential tension between these two  metrics. Let's imagine that this is a recommender in a social media setting, which has to rank items for two different users: the first one is a frequent user who uses the social media regularly, while the other one is a casual user. We display items in Orange/Blue (depending on their group) if they are relevant for that user, and in Grey if they are not. To simplify the computations for this example, let $Exposure(i, u)$ simply be $Position(i,u)$, such that the exposure gap measures the difference in average ranked position for the relevant items from each group. 

The Simpson's paradox comes from making two seemingly contradictory observations. (1) For any given user, Orange is ranked lower than Blue, i.e., Orange seems disadvantaged according to the per-user metric in Def. \ref{def:peruser}. (2) When aggregating across users, the Orange group has better exposure; the average position of Orange items is 4.3, compared to 4.7 for Blue, i.e., Orange seems advantaged according to the aggregate metric in Def. \ref{def:aggregate}.

Intuitively in this example, this paradox is due to the effect of the users on the exposure of items -- in our example, the recommender does a better job at ranking the relevant items (Blue \& Orange) above non-relevant ones (Grey) for active users than for casual ones; this could be due to having less historical data for the latter. Combined with the fact that the Blue group has more relevant content for casual users, this offsets the fact that Blue is always ranked above Orange within the same user.

\subsection{Qualitative analysis}

In this section, we show that the aggregate metric is broader as it is influenced by the per-user gap but also by two additional factors per-user: the performance and the competition. This influence is due to the fact that this aggregate gap compares groups over two different user distributions.

\paragraph{The aggregate metric compares two samples with different user distributions.} In the aggregate metric, we are comparing two samples of relevant items, $D^0$ and $D^1$ with $ D^s = \{ (i,u) | Rel(i,u) = 1, S(i) = s \}$, which have different marginal distributions over users. Indeed, as each sample is defined as a uniform sample from $D^s$, the marginal distribution over users will be proportional to $Count(i | Rel(i, u)=1, S(i) =s)$.

Due to these different marginal distributions, the aggregated exposure of each item/provider group will be differently impacted by the users where it is relevant. This user influence on the aggregate metric acts like a "confound" in causal literature \cite{hernan2010causal, doi:10.1126/science.187.4175.398}. This is what we saw in the example before, where the Blue group is more frequent in casual users, where relevant items have on average lower exposure.

\vspace{-1mm}
\paragraph{Decomposing the user influence into two factors.}

Let's break down the exposure of relevant items for a given user into an intuitive set of factors; The first one is the per-user gap as defined above (we will define them rigorously in the next paragraph), while the other two are additional per-user factors.
\vspace{-1mm}
\begin{enumerate}
    \item \textbf{Per-user:} As in Def. \ref{def:peruser}, this is the difference in exposure between relevant items of different groups on a given user.
    \item \textbf{Performance:} The capacity of the recommender to rank relevant items higher than non-relevant ones on a given user. If the system has higher performance for one user than for another, items relevant to the first user will have higher exposure.
    \item \textbf{Competition:} The number of items that are relevant to a given user. As the exposure is limited (only a few items are shown to the user), the average exposure of relevant items is reduced when many items are relevant to a given user.
\end{enumerate}

This breakdown allows us to identify three separate scenarios under which a Group (let's say Orange) might have lower exposure according to the aggregate metric. Fig \ref{fig:percontext1} shows a situation where there is a per-user gap, which translates into an aggregate gap. However, Fig \ref{fig:performance1} and \ref{fig:competition1} point to different phenomena: while both do not display any per-user gap, the aggregate metric will indicate that Orange is disadvantaged. Indeed, in Fig. \ref{fig:performance1}, the recommender system is not as good at allocating exposure to relevant items for users 1 and 2, and it turns out that the Orange group appears primarily on these users. As a result, Orange still receives lower exposure because it appears primarily on users with lower performance. Competition suggests another scenario displayed in Figure \ref{fig:competition1}: there is more competition on users 1 and 2 (two relevant items, versus just one) and it turns out that the Orange group is primarily relevant to these users. As a result, Orange gets lower exposure because it appears primarily on users with higher competition. From this, we see that lower exposure in aggregate can be due to any of these three factors.

\begin{figure}
    \begin{subfigure}[t]{0.46\textwidth}
      \centering
        \includegraphics[width=\textwidth]{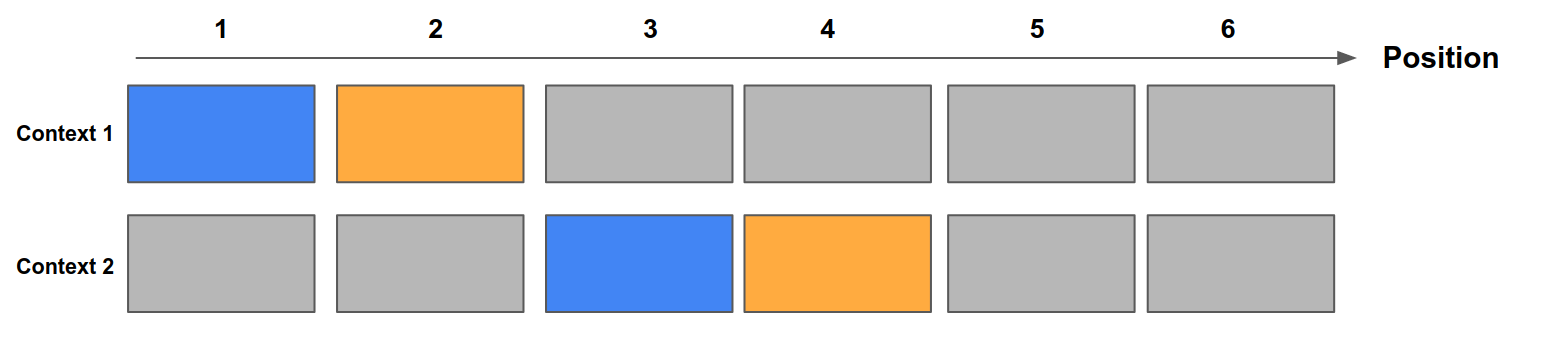}
        \caption{There is a per-user gap, which causes a gap in aggregate.}
        \label{fig:percontext1}
    \end{subfigure}
    \begin{subfigure}[t]{0.46\textwidth}
      \centering
        \includegraphics[width=\textwidth]{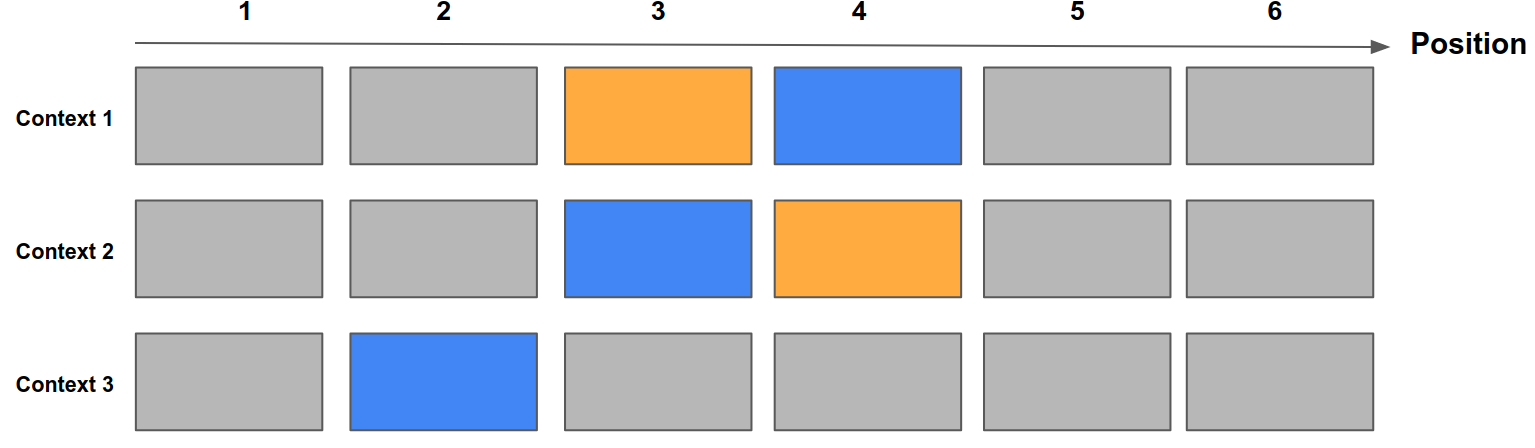}
        \caption{Without per-user gap, Orange still gets lower exposure in aggregate as it appears more on users with lower performance.}
        \label{fig:performance1}
    \end{subfigure}
    \begin{subfigure}[t]{0.46\textwidth}
      \centering
        \includegraphics[width=\textwidth]{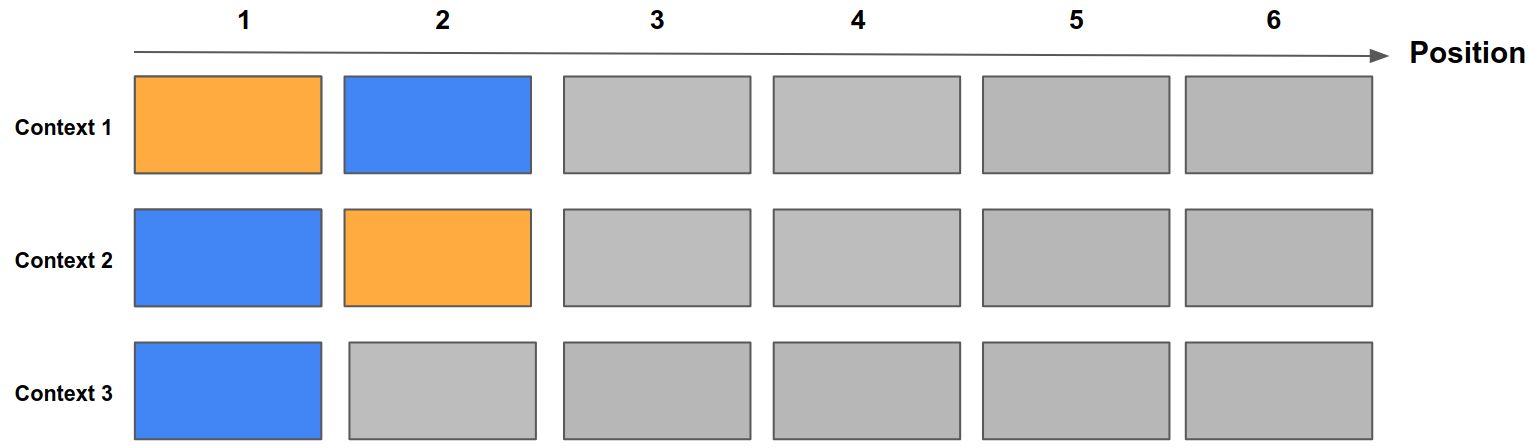}
        \caption{Without per-user gap, Orange still gets lower exposure in aggregate as it appears more on users with higher competition.}
        \label{fig:competition1}
    \end{subfigure}
    \caption{Three separate scenarios with an $AggregateGap$ against Orange. Only the first one is associated with a Per-user gap.}
    \vspace{-5mm}
\label{fig:test}
\end{figure}

\subsection{Theoretical analysis}

We now present theoretical evidence to prove that the qualitative decomposition of the aggregate metric into these three factors is well-founded. Let's start with some preliminary definitions:

\begin{definition}
We define the \textbf{performance} for a given user as the difference of exposure between relevant and irrelevant items.

\vspace{-1mm}
\begin{align*}
\begin{split}
\mathit{Performance}(u) &= \mathop{{}\mathbb{E}}_i [Exposure(i, u) | Rel(i, u) = 1] \\
&- \mathop{{}\mathbb{E}}_i [Exposure(i, u) | Rel(i, u) = 0]
\end{split}
\end{align*}
\end{definition}

\begin{definition}
We define the \textbf{competition} as the number of relevant items from group $s$ for a given user $u$.
\begin{align*}
\begin{split}
Comp_s (u) = Count(i| Rel(i,u) = 1, S(i) = s)
\end{split}
\end{align*}
\end{definition}

\paragraph{Decomposition of the aggregate metric}

We now prove that the competition and performance are sufficient to capture the additional influence of the user distribution on the aggregate metric. We rely on the following theorem, as the key theoretical foundation, which decomposes the aggregate metric as a function of the 3 terms.

\begin{theorem}\label{maintheorem} We can rewrite the $\mathit{AggregateGap}$ as follows:

\vspace{-4mm}

\begin{align*}
\begin{split}
&\mathit{AggregateGap} = \sum_{u} \Biggl( Z* 
 \frac{Comp_0(u)* Comp_1(u)}{Comp_1(u) + Comp_0(u)} * PerUserGap(u) \\
& + (\frac{Comp_1(u)}{\sum_{u} Comp_1(u)} - \frac{Comp_0(u)}{\sum_{u} Comp_0(u)}) * NR(u) * \mathit{Performance}(u)
\Biggr) \\
\end{split}
\end{align*}

\end{theorem}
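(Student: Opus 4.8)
The plan is to collapse the $AggregateGap$ into a difference of two competition-weighted per-user means, and then algebraically peel off the $PerUserGap$ contribution from a residual ``pooled-mean'' term, which a conservation identity for total per-user exposure converts into the $Performance$ factor. First I would rewrite each group's aggregate exposure. Writing $T_s = \sum_u Comp_s(u)$ for the total number of relevant pairs of group $s$, the equivalent ``uniform sample from $D^s$'' formulation (from Def.~\ref{def:aggregate}) places mass $w_s(u) := Comp_s(u)/T_s$ on user $u$, and conditioned on $u$ the exposure averages to $\bar E_s(u) := \mathbb{E}_i[Exposure(i,u)\mid Rel=1,\,S=s]$. Hence $AggregateGap = \sum_u\big(w_1(u)\bar E_1(u) - w_0(u)\bar E_0(u)\big)$, where the key point is that the two samples carry \emph{different} user marginals $w_0 \neq w_1$, which is precisely the source of the paradox.

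Second I would decompose the two group means around the pooled relevant mean $\bar E(u) := \mathbb{E}_i[Exposure(i,u)\mid Rel=1]$. Since $\bar E(u)$ is the $Comp$-weighted average of $\bar E_0$ and $\bar E_1$, and $\bar E_1-\bar E_0 = PerUserGap(u)$ by Def.~\ref{def:peruser}, solving the two relations gives, with $R(u) = Comp_0(u)+Comp_1(u)$,
\[
\bar E_1 = \bar E + \tfrac{Comp_0}{R}\,PerUserGap, \qquad \bar E_0 = \bar E - \tfrac{Comp_1}{R}\,PerUserGap.
\]
Substituting and collecting the $PerUserGap(u)$ terms, their coefficient is $w_1\tfrac{Comp_0}{R} + w_0\tfrac{Comp_1}{R} = \tfrac{Comp_0 Comp_1}{R}\big(\tfrac1{T_0}+\tfrac1{T_1}\big)$, which is exactly $Z\,\tfrac{Comp_0 Comp_1}{Comp_0+Comp_1}$ with $Z = \tfrac1{T_0}+\tfrac1{T_1}$, matching the first summand. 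What remains is the residual $\sum_u\big(w_1(u)-w_0(u)\big)\bar E(u)$.

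Third, and this is the step I expect to be the main obstacle, I would turn this residual into the performance term. The bridge is conservation of total per-user exposure: because each of the $|I|$ items occupies a distinct position and $Exposure=f(Position)$, the per-user sum $\sum_i Exposure(i,u)=\sum_p f(p)$ equals a constant $C$ independent of $u$. Splitting items into relevant and irrelevant and writing $NR(u)$ for the irrelevant count gives $R\,\bar E + NR\cdot \mathbb{E}_i[Exposure\mid Rel=0] = C$; combined with $Performance = \bar E - \mathbb{E}_i[Exposure\mid Rel=0]$ this yields $NR(u)\,Performance(u) = |I|\,\bar E(u) - C$, i.e.\ $\bar E(u)$ equals $NR(u)\,Performance(u)$ plus a user-independent constant, up to the global catalog-size normalization $|I|$ that the statement folds into its constants. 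Finally, since $\sum_u w_1(u)=\sum_u w_0(u)=1$ we have $\sum_u\big(w_1(u)-w_0(u)\big)=0$, so the additive constant contributes nothing and the residual becomes $\sum_u\big(w_1(u)-w_0(u)\big)NR(u)\,Performance(u)$, the second summand; the two pieces together give the claimed identity.

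The genuinely nontrivial ingredient is recognizing that the residual pooled-mean term does not vanish but is exactly the performance contribution: this needs the conserved-total-exposure identity to trade $\bar E(u)$ for $Performance(u)$, and the zero-sum property of the reweighting difference $w_1-w_0$ to discard the resulting additive constant. Everything else is routine bookkeeping on the competition weights $w_s$ and the mean-decomposition identities.
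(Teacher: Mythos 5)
Your proposal is correct and takes essentially the same route as the paper's appendix proof: the same competition-weighted split of the group means around the pooled relevant mean (the paper's Step 2), the same conservation-of-total-exposure identity expressing that pooled mean as $NR(u)*\mathit{Performance}(u)$ plus a user-independent constant (Step 1), and the same cancellation of that constant because the weights $Comp_s(u)/\sum_{u} Comp_s(u)$ each sum to one (Step 3). The only difference is bookkeeping order --- you aggregate first and then decompose, whereas the paper decomposes per user and then aggregates --- which does not change the argument.
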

\noindent where $NR(u) = (1 - \tfrac{Comp_0 (u) + Comp_1 (u)}{|I|})$ is the proportion of items that are not relevant for user $u$, and $Z = \frac{1}{\sum_{u} Comp_1(u)} + \frac{1}{\sum_{u} Comp_0(u)} $ is a constant. The full proof is included in the appendix.

This theorem highlights how the $PerUserGap$, combined with the performance and competition, form the $AggregateGap$. We can share some high-level intuition about the different terms.

\begin{itemize}
    \item The first term represents the contribution of the $PerUserGap$. Users with a balanced amount of relevant Orange and Blue items (i.e., $ Comp_0(u) \cdot Comp_1(u)$ is high) will be weighted heavily, as they are the ones where relevant items from both groups compete the most for exposure.
    \item The second term represents the contribution of the performance. The first factor $(\frac{Comp_1(u)}{\sum_{u} Comp_1(u)} - \frac{Comp_0(u)}{\sum_{u} Comp_0(u)})$ indicates that high performance contributes more to a positive gap when items from group 1 are more frequently relevant to the user. The second factor $NR(u)$ shows that performance matters more when there are more non-relevant items.
\end{itemize}

Interestingly, \textit{our factor decomposition gives us a new framework to grasp conceptual differences within past literature}. In prior work using the aggregate version \cite{alexbeutelrecommendation, gupta}, the per-user gap, the performance and the competition will all affect the metric without any disentanglement, as displayed in Fig. \ref{fig:test}. On the other hand, the past work on per-user (or per-query) metric \cite{ashudeeppaper, fdd_stochastic, linkedin} ignores the potential differences in the performance and competition and their effect. Finally, some of the literature has incorporated knowledge from information retrieval and used an exposure normalized by the relevance of other items. The most common example is the Normalized Discounted cumulative gain (NDCG) \cite{sirui, Biega}, where the denominator accounts for how many items are also relevant for this user. This effectively controls for the impact of the competition factor on the gap, but as we will discuss in the next section is hard to use in practice. We hope that this decomposition can guide the practitioners to understand their evaluation choices more precisely.

\vspace{-2mm}
\section{How to estimate the per-user gap under Partial Observability}
\label{sec:isolating_per_context}

\subsection{Challenge of Partial Observability}

In the previous section, we provided some decomposition to reconcile the two metrics: we outlined that the $AggregateGap$ was combined multiple effects -- the $PerUserGap$ itself, but also the performance and the competition. While we leave the decision to each application and practitioner, we now want to provide them with some technical tools for the metric of their choice in any situations.

The challenge lies in the data collection for real-world recommenders for two main reasons. First, the relevance is typically defined from user feedback including post-click feedback (e.g. are users are satisfied with the item when recommended) \cite{dwell, 5197422} and we typically observe the post-click behavior typically on only 1 item per user. Secondly, to remove any data bias, we typically measure this user feedback through random experiments where an item is sampled from the corpus and shown . This process, highlighted by prior work \cite{joachimsLondon, alexbeutelrecommendation}, enables us to collect a sample of items with user feedback but limits as well the set of relevant items to maximum one per user (the boosted item if engaged). Since this process leads to a uniform sample over $I \times U$, we can compute directly $AggregateGap$. However per-user metrics require knowing the relevance of all items (or at least multiple items) for all users and can therefore not be directly applied to real-world recommenders.

We define \textbf{partial observability}, as the situation where the relevance is known for only a few instances ($Rel(i, u)$ is sparse). \textit{The rest of this section focuses on how we can estimate the per-user metric under such situations, which has not been studied yet}.

\subsection{Sampling from a common user distribution.}

We now propose a framework to isolate the per-user gap under partial observability. As we have seen, the difference of marginal distributions is the reason why some factors of the users, namely the performance and competition, are affecting the gap. \textbf{A natural idea therefore is to match the user distribution of both groups}, which is supported by the following theorem.

\begin{theorem}
Let's assume that our samples of relevant items follow the same marginal distribution over users ( $D^s_{common}$ parameterized by $w_{common}$\footnote{we will discuss later reasonable choices for $w_{common}$}). Then we can estimate the $PerUserGap$ by comparing the average exposure over these two data samples.

\end{theorem}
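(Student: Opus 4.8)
The plan is to show that, once both samples share the common user marginal $w_{common}$, the difference of their average exposures collapses onto a $w_{common}$-weighted per-user gap, with the competition- and performance-driven confounds cancelling. Concretely, I read ``same marginal distribution'' as $w_{common}(u) = \tfrac{Comp_1(u)}{\sum_{u'} Comp_1(u')} = \tfrac{Comp_0(u)}{\sum_{u'} Comp_0(u')}$ for every $u$, and the target is to prove that the quantity computed from the two matched samples equals $\sum_u w_{common}(u)\, PerUserGap(u)$. This is precisely the per-user gap reweighted by $w_{common}$; choosing $w_{common}$ to coincide with the user distribution used in $PerUserGap = \mathbb{E}_u PerUserGap(u)$ then recovers the per-user metric itself.

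The cleanest route I would take is to invoke Theorem \ref{maintheorem} directly. Writing $N_s = \sum_{u} Comp_s(u)$, the matching condition reads $Comp_s(u) = w_{common}(u)\, N_s$. In the performance contribution of Theorem \ref{maintheorem} the leading factor is $\big(\tfrac{Comp_1(u)}{N_1} - \tfrac{Comp_0(u)}{N_0}\big) = w_{common}(u) - w_{common}(u) = 0$ for every $u$, so the entire performance term vanishes regardless of $NR(u)$ and $\mathit{Performance}(u)$. For the surviving per-user term, substituting $Comp_s(u) = w_{common}(u) N_s$ and $Z = \tfrac{1}{N_1} + \tfrac{1}{N_0}$ gives
\[
Z \cdot \frac{Comp_0(u)\,Comp_1(u)}{Comp_0(u)+Comp_1(u)}
= \frac{N_0+N_1}{N_0 N_1}\cdot \frac{w_{common}(u)\,N_0 N_1}{N_0+N_1}
= w_{common}(u),
\]
so $\mathit{AggregateGap}$ over the matched samples equals $\sum_u w_{common}(u)\, PerUserGap(u)$, as claimed.

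As an independent cross-check via linearity, let $g_s(u) = \mathbb{E}_{i}[Exposure(i,u)\mid Rel(i,u)=1, S(i)=s]$. Drawing each matched sample with user marginal $w_{common}$ and, within a user, uniformly over that user's relevant group-$s$ items, the average exposure of group $s$ is $\sum_u w_{common}(u)\, g_s(u)$. Subtracting the two groups and applying Definition \ref{def:peruser} termwise yields $\sum_u w_{common}(u)\,(g_1(u)-g_0(u)) = \sum_u w_{common}(u)\, PerUserGap(u)$; the key point is that, because the two averages now use \emph{identical} weights $w_{common}(u)$, the sums combine and the confounding cross-terms present in the raw $\mathit{AggregateGap}$ disappear.

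The main obstacle I expect is well-definedness and realizability rather than algebra. First, $PerUserGap(u)$ and both $g_s(u)$ only exist on users with $Comp_0(u), Comp_1(u) \ge 1$, so $w_{common}$ must be supported there; users relevant to only one group cannot be matched and have to be handled by, or excluded from, the common distribution. Second, the recovered estimand is a $w_{common}$-weighted per-user gap, so one must argue that this particular reweighting is acceptable and pick $w_{common}$ to target the desired per-user average. Under partial observability we cannot even count $Comp_s(u)$ directly to enforce the matching condition, which is exactly what motivates the reweighting / propensity construction of $w_{common}$ developed next.
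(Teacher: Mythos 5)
Your proposal is correct, and it in fact contains two proofs. Your ``independent cross-check'' via linearity is essentially the paper's own proof verbatim: the paper simply factors each joint expectation through the common user marginal, $\mathbb{E}_{(i,u)\sim D^s_{common}}[\cdot] = \mathbb{E}_{u\sim w_{common}}\,\mathbb{E}_i[\mathit{Exposure}(i,u)\,|\,Rel(i,u)=1, S(i)=s]$, and subtracts, obtaining $\mathbb{E}_{u\sim w_{common}} PerUserGap(u)$ by linearity. Your primary route --- specializing Theorem \ref{maintheorem} under the matching condition $Comp_s(u) = w_{common}(u)\sum_{u'} Comp_s(u')$ --- is a genuinely different derivation that the paper does not give: you show the performance term vanishes identically because its leading factor is $w_{common}(u)-w_{common}(u)=0$, and the per-user coefficient $Z\,\frac{Comp_0(u)Comp_1(u)}{Comp_0(u)+Comp_1(u)}$ collapses to $w_{common}(u)$; the algebra checks out. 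What this buys is an explicit consistency check between the two theorems (the decomposition really does degenerate to the matched-sample identity) and a concrete display of which confound cancels; the cost is a small hypothesis mismatch you should flag: Theorem \ref{maintheorem} is stated for the raw samples $D^s$, whose user marginals are proportional to the raw counts $Comp_s(u)$, so invoking it for matched samples requires reading $Comp_s(u)$ as the effective (reweighted) counts --- your substitution covers the case where the raw marginals already coincide, and the general IPS-matched case follows by the same algebra with weighted counts, which is implicitly what your direct linearity argument establishes without any such caveat. Your closing observations --- that $w_{common}$ must be supported on users with $Comp_0(u), Comp_1(u) \geq 1$ since $PerUserGap(u)$ is otherwise undefined, and that the recovered estimand is a $w_{common}$-weighted per-user gap rather than the uniform average $\mathbb{E}_u PerUserGap(u)$ --- are correct and more careful than the paper, which leaves both points implicit.
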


\begin{proof}

We are taking the difference of two expectations over the same distribution of users and can use the linearity of the expectation to re-arrange the term.

\begin{align*}
\begin{split}
&\mathop{{}\mathbb{E}}_{(i,u) \sim D^1_{common}} [Exposure(i, u)| Rel(i,u) = 1, S(i) = 1]\\ &-\mathop{{}\mathbb{E}}_{(i,u) \sim D^0_{common}} [Exposure(i, u)| Rel(i,u) = 1, S(i) = 0] \\
&= \mathop{{}\mathbb{E}}_{u \sim w_{common}} \mathop{{}\mathbb{E}}_i [Exposure(i, u)| Rel(i,u) = 1, S(i) = 1] \\
&- \mathop{{}\mathbb{E}}_{u \sim w_{common}} \mathop{{}\mathbb{E}}_i [Exposure(i, u)| Rel(i,u) = 1, S(i) = 0] \\
&= \mathop{{}\mathbb{E}}_{u \sim w_{common}} PerUserGap(u)
\end{split}
\end{align*}

\end{proof}

This theorem is intuitive. If our data samples have the same distribution over users, then the user factors will not influence the metric. Matching the distribution of groups over some variables is a common technique in the causal literature to control for the effect of confounding variables \cite{Stuart_2010, 10.1093/biomet/70.1.41,causal2, causal3}.

\vspace{-2mm}
\subsection{Reweighting technique to match distributions}

The previous theorem gives us a recipe to estimate the per-user metric by ensuring that our two data samples have the same distribution. As modifying the data collection might be challenging, we instead suggest a reweighting approach, similar to Inverse Propensity Scoring (IPS), to match the distributions offline.

We consider some features of a user such as user activity, time or country. The choice of these variables is application-dependent and needs to be interpreted in the particular practical setting, similar to the discussion of selecting resolving variables in \cite{resolving, explainablediscrimination}: Some might want to use a fine-grained representation of the user to control for any potential influence on the gap, while some others might select only a subset of the user characteristics.

We now assume that we have a list of user variables $V$ that we want to control for. As mentioned earlier, the two groups have different distributions over $V$, which therefore influences the measure of the gap. The distribution over the features $V$ for one group $s \in \{0, 1\}$ can be written as follows:

\vspace{-1mm}
\begin{align*}
\begin{split}
\forall\ v\ \in\ V, w_s(v) & = \mathbb{P}_{i, u}(V(u) = v\ |\ Rel(i,u) = 1, S(i)= s) \\
&\propto \mathbb{P}_{i, u}(V(u) = v,\ Rel(i,u) =1,\ S(i)= s)
\end{split}
\end{align*}

\textbf{Our method consists in matching the two distributions,  $w_0$ and $w_{1}$, by reweighting the data based on IPS}. This technique has been used as to remove data biases in offline recommender evaluation \cite{DBLP:journals/corr/SwaminathanJ15, joachimsLondon}, or to control for the effect of confounding variables in the causal literature \cite{10.1093/biomet/70.1.41,causal2, causal3}.

To match the two distributions, we start by selecting a target distribution over the variables V. While various target distributions could be used\footnote{For instance, in addition to the uniform distribution presented in this paper, one might choose to use the empirical distribution of the sensitive group.}, we take here a simple approach and assume a uniform target distribution. In other words, we want to match each group such that the distribution over V is uniform: $  \forall\ v\ \in\ V, w_{common}(v) = 1$. We can then use IPS to match the data to this target distribution by re-weighting the collected data of each group $s$ by $w_{common}(v) / w_s(v)$.

\vspace{3mm}

\textbf{Estimation of the weight.}
If V is simple (e.g. only a few variables that can be bucketized easily), $w_s(v)$ should be computable by empirical counts such as:
\vspace{-2mm}
\begin{align*}
\begin{split}
&\forall\ v\ \in\ V, w_s(v) = \frac{|(i,u) | Rel(i,u) =1, S(i)= s, V(u) = v)|}{|(i,u) | Rel(i,u) =1, S(i)= s|}
\end{split}
\end{align*}

If v is more complex (e.g. we want to correct for a large number of user characteristics, or v can not be bucketized), this estimation might not be possible due to data sparsity. In this case, we can rewrite the source distribution as follows:
\vspace{-0.5mm}

\begin{align}
\label{ipseq}
\begin{split}
w_s(v) &\propto \mathbb{P}_{i, u}(V(u) = v, Rel(i,u) =1, S(i)= s) \\
  &\propto \mathbb{P}_{i,u}(V(u) = v, S= s) * \mathbb{P}_{i,u}(Rel(i,u) =1|V(u) = v, S(i)= s)
\end{split}
\end{align}

As the first term not depend on relevant items, it does not suffer from the partial observability issue and can be computed exactly from the data. The second term is the most challenging term and we can use a model to estimate it. To that purpose, we can use the random data, take as input all confounding variables V and predict the probability of an item being relevant\footnote{If V is a fine grained representation of the user, the prediction task might be challenging and the model might be imperfect (it is close to re-engineering the recommender system). However the main role of this model is to correct for potential confounding variables, rather than matching perfectly the target distribution.}.

\section{Simulation}

In this section, we use a simulated recommender system to demonstrate that the Simpson's paradox is observable under reasonable conditions. In particular, we show that $PerUserGap$ and $AggregateGap$ can point in opposite directions, and more generally that varying each of the three factors individually can cause large variations in $AggregateGap$. Furthermore, we show that our technique enables us to estimate the $PerUserGap$ under partial observability.

\subsection{Simulation setup}

We first explain at a high level the setup of the recommender. This design enables us to vary the per-user gap, the performance, and the competition as needed, and then observe the resulting aggregated gap in the system. To simulate these three factors, we have parameters for the density, the noise, and the model bias.

\vspace{2mm}
\textbf{High-level description:}
We simulate a recommender system composed of N users and M items. We categorize these items into 2 different groups.
We generate the data from a low-rank representation, similar to past work \cite{alexbeutelLowRank}: Each user (resp. item) can be represented with a normalized r-dimensional vector $U_i$ (resp. $V_j$). The relevance of an item for a user is then equal to the product of both representations: $U_i * V_j$. To map it to our problem, we binarize the relevance by setting a threshold: $Rel = 1$ if $U_i * V_j > T$ else 0. We arbitrarily set a threshold of T=0.3, which empirically maps to about 20\% of the items being relevant. Finally, the recommender scores are a noisy and biased estimate of the true scores: $Score[i, j] = U_i * V_j + noise[i, j] + bias[j]$.

The recommender system ranks the items based on $Score[i, j]$. We represent the exposure allocated to an item as the discounted cumulative gain of a position: $exposure(position)=\tfrac{1}{log_2(1+ position)}$. Finally, we reproduce the setup of the random experiment: for each user, one item is selected at random and we observe the true relevance. The aggregate metric is computed as the difference of exposure between the relevant items of each groups.

\vspace{2mm}
\textbf{Choice 1: How to generate the item representations [affecting the competition factor]} For users, we simply sample uniformly in $[-1, 1]^r$. However, for items, it is key to ensure that the two groups have different distributions of item representations. There are many ways to represent two different distributions over the latent space and we decide here to sample items along two different directions in the latent space. More precisely, for each group s in \{0, 1\}, we generate items from Gaussian($\mu_s$, $\frac{1}{d_s}$). We choose arbitrarily $\mu_0$ and $\mu_1$ to be orthogonal ([1, 0, .., 0] and [1, 1, .., 0]) but findings would adapt to other choices. More importantly, $d_s$ controls the density of each group and affects the competition factor. Indeed if a group is dense ($d_s$ large), all items are similar to each other and a user interested in a given item will also be interested in many other ones (high competition). We set $d_0$=1.5 (giving some overlap between the group distributions) and vary only $d_1$.

\vspace{2mm}
\textbf{Choice 2: How to generate the noise [affecting the performance factor]} We also want our simulation to represent a recommender that has different levels of performance for the two groups.
To that end, we use a default level of noise, $\sigma$, but make it vary alongside the direction $\mu_1$, such that the typical items from group 1 (aligned with $\mu_1$) will have much higher level of noise. This difference of noise is controlled by a parameter $\sigma_1$.
More precisely: $Noise[i, j ] = Gaussian( 0,\ \sigma * [1 + f(U_i * \mu_1)])$. We set $\sigma=0.05$ (arbitrarily), and F is a linear function so that f(-1)=0 mapping to items that are the opposite of typical items from group 1 and f(1)=$\sigma_1$ mapping to typical items from group 1. As a result, the level of noise will vary between $\sigma$ and $\sigma * (1+ \sigma_1)$ for typical items of group 1.

\vspace{2mm}
\textbf{Choice 3: How to represent the model bias [affecting the per-user metric]}
We represent the model bias by adding a constant term to the predictions when items are from group 1.

\vspace{-2mm}

\subsection{Variations of PerUserGap, Performance and Competition changes the AggregateGap}

First, we conduct some analysis to verify that our simulation enables us to affect the performance, the competition and the $PerUserGap$ of the recommender system. We vary the parameters of the system (the model bias, the level of noise and the density) and see their effect on the three factors (per-user gap, performance, competition). While these factors are defined for every user $u \in U$, we extend their definition to an aggregate level via the following methodology: (1) the per-user gap factor is reported as the average PerUserGap(u) over all users where at least one item from each group is relevant; (2) the competition and performance factors are computed for a hypothetical user that would be aligned with $\mu_1$, to reflect the characteristics that will dominate for the items from group 1.

Results are displayed in Fig \ref{fig:simulation1}. We confirm that the 3 system parameters are effective in varying the corresponding 3 factors. Indeed, the first graph shows that we can affect the per-user gap by changing the model bias and holding the other 2 parameters constant. The two other graphs make similar observations for the performance and competition.

\vspace{-1mm}
\begin{figure}[htbp]
    \centering
    \includegraphics[width=0.52\textwidth]{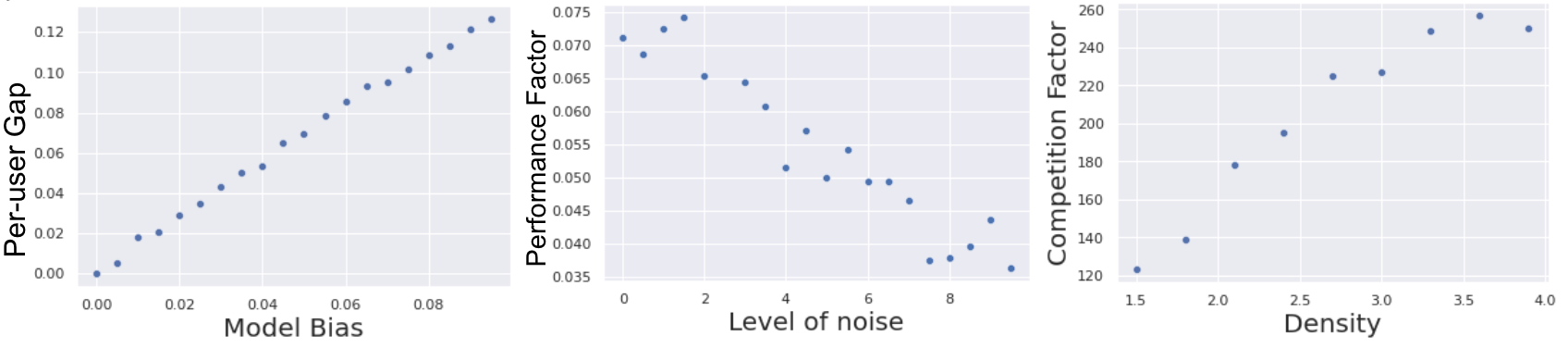}
    \caption{Each parameter (model bias, noise, density) varies the targeted factors (per-user gap, performance, competition).}
    \label{fig:simulation1}
    \vspace{-2mm}
\end{figure}

Then we emphasize in Fig. \ref{fig:simulation2} how the $AggregateGap$ is affected by these three factors. For instance, we see (left graph) that increasing the model bias (i.e., increasing the $PerUserGap$) results in an increase of the aggregate metric. This experiment provides empirical evidence that the performance, competition, and per-user gap affect the aggregate metric. In particular, the $AggregateGap$ can vary greatly even when the $PerUserGap$ is zero or fixed.

\vspace{-2mm}
\begin{figure}[htbp]
    \centering
    \vspace{-1mm}
    \includegraphics[width=0.52\textwidth]{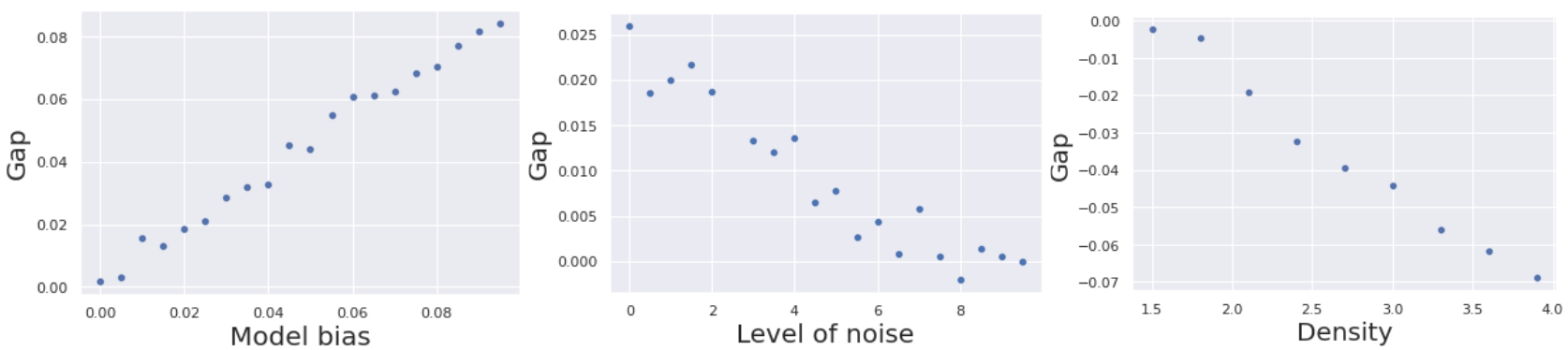}
    \caption{Each parameter will influence $AggregateGap$.}
    \label{fig:simulation2}
\end{figure}

\vspace{-2mm}

\subsection{Estimating the per-user metric}

We now want to confirm that the method described in Section \ref{sec:isolating_per_context} can estimate the per-user metric well. To do so, we reweigh the data based on eq. \ref{ipseq}: the first term is constant ($U_i$ is sampled uniformly) and we estimate the second term by a model that takes as input the user embedding $U_i$ and predicts the average relevance of items of each group for that user.

We first conduct an experiment to verify that the method is effective even when the competition or performance affect the aggregate gap significantly. To that end, we arbitrarily set a constant model bias of 0.03 and then vary the level of noise and density. Results are shown in Figure \ref{fig:Sim2}. On the left graph, we observe that as the density increases, the aggregate gap increases as well. We then plot the per-user gap (both true and estimated) and observe that our estimate is close to the true value. Similar observations can be made in the right graph when we vary the level of density. This experiment shows that our estimate of the per-user gap is reliable even when the two other factors have a large influence on the gap.

\begin{figure}[htbp]
    \centering
    \includegraphics[width=0.5\textwidth]{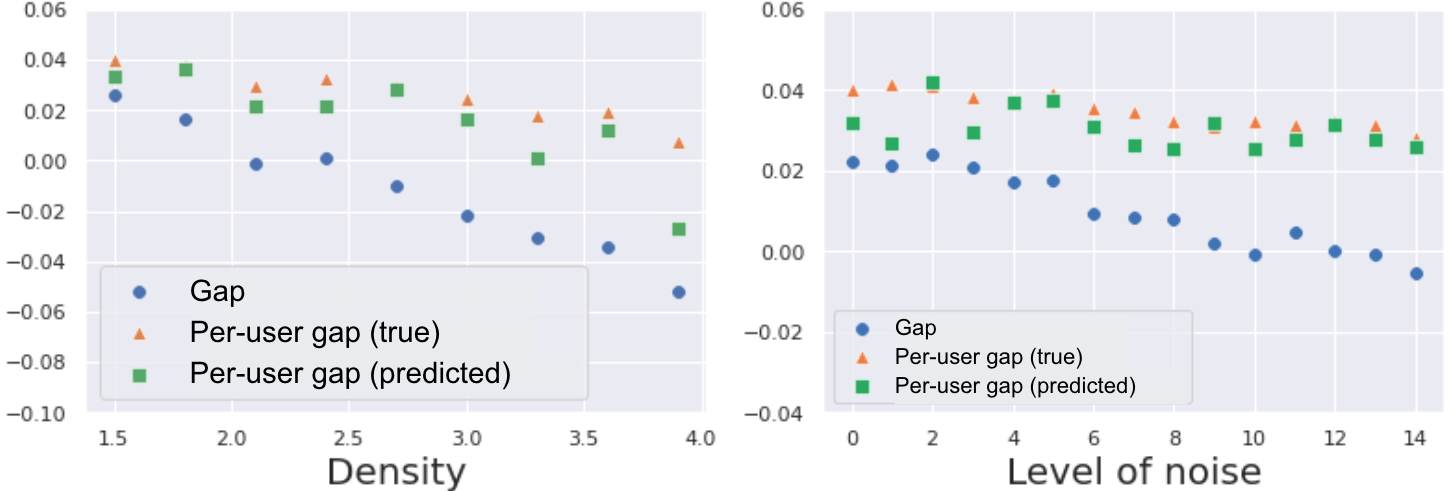}
    \caption{True vs Predicted PerUserGap depending on density/noise.}
    \label{fig:Sim2}
\vspace{-3mm}
\end{figure}

We further assess the reliability of our estimate by varying performance and competition at the same time. We again set a constant model bias of 0.03 and vary the other two parameters over a grid of values (model noise from 0 to 10 and density from 1.5 to 4.). Over 160 simulations, we see that the $AggregateGap$ is significantly different from the true $PerUserGap$ (4.0pp difference on average), which reflects again that the two other factors have a strong impact on the gap. Comparatively, our estimated $PerUserGap$ is a relatively robust estimate of the true value (0.9pp difference on average). This show that our method is able to estimate $PerUserGap$ effectively.

\section{Real-world recommender systems}

We now study a real-world recommender system: for every user (here the "user" includes contextual features such as time), the system predicts the relevance of all items in the corpus based on past interactions (click and post-clicks signals) and returns a ranked list of items. This is  a similar setup to \cite{alexbeutelrecommendation, DBLP:journals/corr/abs-1708-05031, 10.1145/3219819.3220007}. Only the top items are displayed to the user. We have available the group information of a subset of providers and can split them into a binary groups $S=\{0, 1\}$. Our goal is to evaluate offline whether the system provides equal treatment based on the sensitive group $S$. We want to estimate the aggregate metric and define the terms as follows.

\begin{itemize}
    \item \textit{Relevance}. Per product definition, an item is relevant for a user if, had the item been recommended, the user would have clicked on and been satisfied with the item. In order to access this in practice, we rely on random experiments where, for a subset of requests, an item is picked randomly and boosted high in the ranking. We then filter to boosted items with high user satisfaction to build our data sample.
    \item \textit{Exposure}. We estimate the exposure of items that were boosted and engaged based on the initial position before boosting. We use the impression-prior aggregated over historical data to estimate the chance of impression at this initial position.
\end{itemize}

We report in Fig. \ref{fig:graphReal} the aggregate exposure for items that the user engages with when recommended to them, depending on the sensitive group. We observe that relevant items from $S=1$ receive lower exposure on average, mapping to an $AggregateGap$ against this group. Note that due to the sensitive nature, we cannot report absolute exposure numbers. However we keep the y axis constant, so that the graphs can indicate the direction and scale of the difference in exposure between groups.

Then we want to understand if the  $AggregateGap$ is affected by differences in users for the groups, as illustrated by the Simpson's paradox of this paper. As the data collection relies on random experiment, we do not observe the relevance for every items (conditions of \textit{partial observability}) and therefore followed the approach described in Section \ref{sec:isolating_per_context}. Based on product experience, we identified some key characteristics of a user (e.g. some features related to the user past behavior or the related query) and used the technique presented to control for the influence of these variables.

\begin{figure}[htbp]
    \centering
    \includegraphics[width=0.3\textwidth]{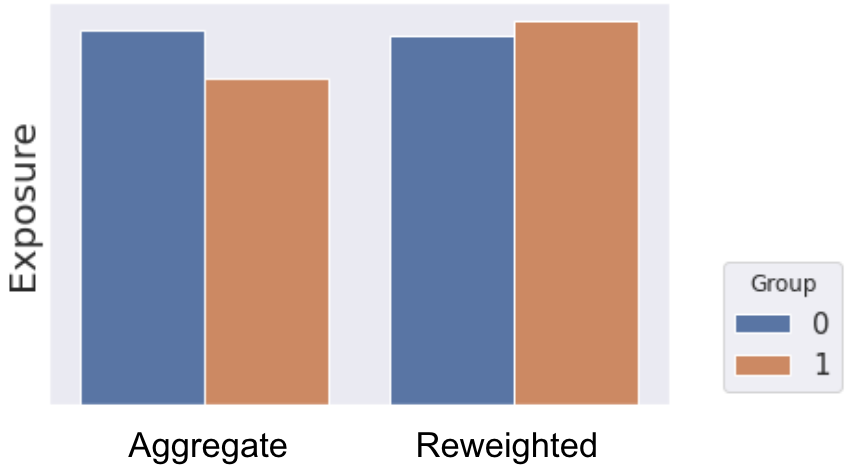}
    \caption{Exposure of 2 groups in a real-word recommender. Left: $AggregateGap$. Right: Estimation of $PerUserGap$ by reweighting.}
    \label{fig:graphReal}
\end{figure}

We report in Figure \ref{fig:graphReal} (right part) the exposure of each group after we control for these variables. The difference between both groups, which maps to the estimated per-user metric, is inverted compared to the gap from the $AggregateGap$: when comparing over users with the same characteristics, defined by the set of variables, relevant items from $S=1$ have actually higher exposure.

These results represent a real-world illustration of the Simpson's paradox, which our methodology has been able to exhibit: relevant items from  $S=1$ have lower $AggregateGap$, but have have higher exposure when compared on requests with similar characteristics which is the estimated $PerUserGap$.

\section{Conclusion}

Past literature has established two different fairness notions, comparing the exposure of relevant items either on a given user, or aggregated across users. While both notions seem intuitive, we illustrate the tension between them through a Simpson's paradox, where they lead to opposite conclusions. We provide a framework to articulate these differences: as the aggregate metric compares two groups with different user distributions, it will be influenced by the users. We show that the aggregate metric can be expressed as a function of the per-user metric and two user factors (competition and performance). This decomposition helps us to reconcile the conceptual differences within past literature. We then propose a method to estimate the per-user metric in the common situation of only partially observable relevance. This technique matches the distribution of users over the groups, which controls for the effect of performance and competition. We use simulations to show that this method is effective to estimate the per-user gap in practice.

This research shows that studying fairness in recommender system is complex and multi-faceted. While this work presents some foundations, it opens the door for more targeted remediation. It is indeed possible that one might take different actions depending on whether a gap is due to the performance, competition, or per-user factors.

\textbf{Acknowledgements:} The authors would like to thank Fernando Diaz and Konstantina Christakopoulou for their valuable feedback on this paper.

\bibliographystyle{plain}
\bibliography{references}

\newpage
\clearpage
\appendix

\newpage
\onecolumn
\section*{Appendix: Proof of Theorem \ref{maintheorem}}
\textbf{Step 1: Exposure of relevant (vs irrelevant) items on a given user.} 

The performance controls how the total exposure available (noted as ExposureAvailable) is allocated to relevant/irrelevant items, as we see in the following equations (using the definition of Performance).

For a given user $u$,
\begin{align*}
\begin{split}
ExposureAvailable =& \ (Comp_0(u) + Comp_1(u)) * \mathop{{}\mathbb{E}}_i [Exposure(i, u) | Rel(i,u) = 1] \\
&+ (|I| - Comp_0(u) - Comp_1(u)) *
\mathop{{}\mathbb{E}}_i [Exposure(i, u) | Rel(i,u) = 0] \\
ExposureAvailable =& \ |I| * \mathop{{}\mathbb{E}}_i [Exposure(i, u) | Rel(i,u) = 1] \\
&- (|I| - Comp_0(u) - Comp_1(u)) * Performance(u)
\end{split}
\end{align*}

Finally, we have:
\begin{align*}
\begin{split}
\mathop{{}\mathbb{E}}_i [Exposure(i, u) | Rel(i,u) = 1] =& \ \frac{ExposureAvailable}{|I|} \\
&+ (1 - \frac{Comp_0(u) + Comp_1(u))}{|I|}) * Performance(u)) 
\end{split}
\end{align*}

\vspace{5mm}
\textbf{Step 2: Exposure of relevant items from a given group on a given user.}

The expected exposure of relevant content is a weighted sum over the exposure of the two groups. The following equations follow (using the definition of Per-user Gap).

For a given user $u$,
\begin{align*}
\begin{split}
\mathop{{}\mathbb{E}}_i [Exposure(i, u) | Rel(i,u) = 1] =& \
  \frac{Comp_1(u)}{Comp_1(u) + Comp_0(u)} * \mathop{{}\mathbb{E}}_i [Exp(i,u) | Rel(i,u) = 1, S = 1] \\
& + \frac{Comp_0(u)}{Comp_1(u) + Comp_0(u)} * \mathop{{}\mathbb{E}}_i [Exp(i,u) | Rel(i,u) = 1, S = 0] \\ 
 \mathop{{}\mathbb{E}}_i [Exposure(i, u) | Rel(i,u) = 1] =& \ \mathop{{}\mathbb{E}}_i [Exp(i,u) | Rel(i,u) = 1, S = 1]  - \frac{Comp_0(u)}{Comp_1(u) + Comp_0(u)} * PerUserGap(u)
\end{split}
\end{align*}

\vspace{5mm}
Finally we have:

\begin{align*}
\begin{split}
\mathop{{}\mathbb{E}}_i [Exposure(i, u) | Rel(i,u) = 1, S = 1] = & \ [Exposure(i, u) | Rel(i,u) = 1] + \frac{Comp_0(u)}{Comp_1(u) + Comp_0(u)} * PerUserGap(u)
\end{split}
\end{align*}

This equation is relatively intuitive. If there is no per-context bias, the relevant content from the sensitive group will get the same exposure. Otherwise, the exposure might be higher or lower depending on the bias.

We can then write the exposure of a given group as a function of the per-user bias, the performance and the competition.

\begin{align*}
\begin{split}
\mathop{{}\mathbb{E}}_i [Exposure(i, u) | Rel(i,u) = 1, S = 1]
= & \ \frac{ExposureAvailable}{|I|} \\
&+ (1 - \frac{Comp_0(u) + Comp_1(u)}{|I|}) * Performance(u)\\
&+ \frac{Comp_0(u)}{Comp_1(u) + Comp_0(u)} * PerUserGap(u)
\end{split}
\end{align*}

\vspace{50mm}
\textbf{Step 3: Gap aggregated across users.}

\begin{align*}
\begin{split}
AggregateGap =&  \ \frac{1}{\sum_{u} Comp_1(u)} \sum_{u} Comp_1(u) * \mathop{{}\mathbb{E}}_i [Exposure(i, u) | Rel(i,u) = 1, S = 1] \\
&- \frac{1}{\sum_{u} Comp_0(u)} \sum_{u} Comp_0(u) * \mathop{{}\mathbb{E}}_i [Exposure(i, u) | Rel(i,u) = 1, S = 0] \\
%
AggregateGap =& \ \sum_{u} \Biggl(\\
\frac{Comp_1(u)}{\sum_{u} Comp_1(u)} * \biggl(
&\frac{ExposureAvailable}{|I|}
+ (1 - \frac{Comp_0(u) + Comp_1(u)}{|I|}) * Performance(u)
+ \frac{Comp_0(u)}{Comp_1(u) + Comp_0(u)} * PerUserGap(u)\biggr) \\
- \frac{Comp_0(u)}{\sum_{u} Comp_0(u)} * \biggl(
&\frac{ExposureAvailable}{|I|}
+ (1 - \frac{Comp_0(u) + Comp_1(u)}{|I|}) * Performance(u)
- \frac{Comp_1(u)}{Comp_1(u) + Comp_0(u)} * PerUserGap(u)\biggr) \Biggr) \\
AggregateGap =& \ \sum_{u} \Biggl(\\
&\frac{ExposureAvailable}{|I|} * \biggl(\frac{Comp_1(u)}{\sum_{u} Comp_1(u)} - \frac{Comp_0(u)}{\sum_{u} Comp_0(u)}  \biggr) \\
& + Performance(u) * \biggl(\frac{Comp_1(u)}{\sum_{u} Comp_1(u)} - \frac{Comp_0(u)}{\sum_{u} Comp_0(u)} \biggr) * \biggl(1 - \frac{Comp_0 (u) + Comp_1 (u)}{|I|} \biggr) \\
& + PerUserGap(u) * \biggl(\frac{Comp_1(u)}{\sum_{u} Comp_1(u)}  * \frac{Comp_0(u)}{Comp_1(u) + Comp_0(u)} + \frac{Comp_0(u)}{\sum_{u} Comp_0(u)} * \frac{Comp_1(u)}{Comp_1(u) + Comp_0(u)} \biggr) \Biggr) \\
AggregateGap =& \ \sum_{u} \Biggl(\\
& + Performance(u) * \biggl(\frac{ Comp_1(u)}{\sum_{u} Comp_1(u)} - \frac{Comp_0(u)}{\sum_{u} Comp_0(u)}\biggr) * \biggl(1 - \frac{Comp_0 (u) + Comp_1 (u)}{|I|} \biggr) \\
& + PerUserGap(u) * \biggl(\frac{1}{\sum_{u} Comp_1(u)} + \frac{1}{\sum_{u} Comp_0(u)} \biggr) * \frac{Comp_0(u) * Comp_1(u)}{Comp_1(u) + Comp_0(u)} \Biggr) \\
\end{split}
\end{align*}

\end{document}